\documentclass[11pt]{article}

\usepackage[utf8]{inputenc}
\usepackage[T1]{fontenc}
\usepackage{amsmath,amssymb}
\usepackage{graphicx}
\usepackage{amsthm}
\theoremstyle{plain}
\newtheorem{theorem}{Theorem}
\newtheorem{lemma}{Lemma}
\newtheorem{proposition}{Proposition}
\theoremstyle{definition}
\newtheorem{definition}{Definition}
\theoremstyle{remark}
\newtheorem{remark}{Remark}
\usepackage[numbers]{natbib}
\usepackage{hyperref}

\title{An Upper Bound on the Probability That a User Encounters an Undiscovered Defect}
\author{Carlos M. Hern\'andez-Su\'arez \and Karla Hern\'andez-Cuevas \\[4pt]
  Coordinaci\'on General de Investigaci\'on Cient\'ifica \\
  Universidad de Colima, Colima, Colima 28040, Mexico \\[2pt]
  \texttt{cmh1@cornell.edu} \quad \texttt{karlahdezcvs@gmail.com}}
\date{\today}

\begin{document}
\maketitle

\begin{abstract}
Before releasing software to a general population, a developer must weigh a
single question: if we ship now, what fraction of users will still hit a defect?
This is not a question about how many defects remain, nor whether any particular
defect is present---the quantities the reliability literature has long
estimated---but about a different and, for a release decision, more consequential
one: the probability that a user encounters a defect at all. We give a direct,
distribution-free answer. Reading each beta-test report as a draw from the user
population and each distinct defect as a class, we show that the fraction of
defects reported exactly once, $s/n$, is a conservative upper bound on the
probability that a user encounters a defect unseen in testing. This bound is the
exact maximum-likelihood estimate of the mass of unseen defects under a general
urn construction---the canonical form---into which any population of classes
embeds; because that construction charges every singleton to the unseen
reservoir, $s/n$ overstates the user's risk rather than understating it, the
direction a release decision requires. The estimate needs no operational profile,
no assumption on the number or frequency of defects, and no model of the
program's internal structure---since a defect's report count already reflects how
many users reach it, the estimate is invariant to whether the reachability graph
is a tree or a directed acyclic graph, and defects hidden behind other defects
are bounded automatically. We validate the estimator against synthetic
populations with known ground truth, and discuss the encounter-level data---beta
or crash telemetry---under which the user-facing reading holds.
\end{abstract}

\section{Introduction}

Software intended for release to a general population is hardened through
successive rounds of beta testing: users exercise the system, report the
failures they encounter, developers repair them, and the cycle repeats. Each
round poses the same decision---release now, or test another round?---and that
decision turns on a single quantity: the probability that a user of the released
software will still encounter a bug. A guarantee that this probability is small,
or more precisely an \emph{upper bound} on it, is what a principled release
decision requires.

Most existing methods answer a different question: how many defects remain.
Software reliability growth models fit the cumulative failure history to a
parametric process---classically the nonhomogeneous Poisson model of
\citet{goel1979}---and extrapolate to a total defect count, reporting the
difference from the number already found. Capture--recapture estimators borrowed
from ecology use the overlap among independent inspectors to estimate fault
content \citep{petersson2004}, an approach later driven by post-release user
reports rather than inspectors \citep{bucholz2009}. A third line converts a
prior estimate of residual defects into a conservative worst-case bound on the
failure rate \citep{bishop2002,salako2025}.

Two assumptions run through this body of work. Its estimand is a \emph{count} of
faults, and its passage from that count to a user-facing reliability figure
requires an \emph{operational profile}---the distribution over how the software
is exercised in the field---a quantity that is itself notoriously hard to
estimate \citep{bishop2002}. But the release decision is about users, not faults.
Ten defects reachable only through an obscure configuration menu threaten fewer
users than a single defect on the sign-in screen; a defect count is blind to this
difference, whereas a probability of user-facing failure is defined by it.
\citet{bucholz2009} confront the same gap directly, conceding that their
estimator is silent on defects that are never reported because they seldom
manifest, and fall back on an informal averaging argument.

A separate tradition estimates precisely a probability. The coverage, or
missing-mass, estimator of \citet{good1953}, credited by its author to Turing,
equates the probability that the next observation belongs to a previously unseen
class with $s/n$, the fraction of classes seen exactly once in a sample of size
$n$. The estimator is distribution-free and thoroughly characterized: it is
asymptotically optimal \citep{orlitsky2003,orlitsky2015}, concentrates sharply
around its target in finite samples \citep{mcallester2003,skorski2021}, admits
distribution-free high-probability upper bounds \citep{bubeck2013} and a normal
limit law for the coverage \citep{esty1983}, and is subject to a sharp
limitation---the missing mass cannot be consistently estimated without
assumptions on the tail of the class frequencies \citep{mossel2019}. The same
principle grounds class-number estimation in ecology \citep{chao1992} and, in
earlier work of one of the authors, the coverage of a conserved germplasm
collection \citep{hernandez2018}. Its estimand---the chance of meeting something
not yet seen---is exactly the release question, and it requires no operational
profile, dispensing with the assumption the count-based methods find most
troublesome.

Neither literature, then, answers the release question as posed. The
count-based methods estimate the wrong object and require an operational profile
they cannot reliably obtain; the coverage methods estimate the right object but
have entered software only through fuzzing, where the sampler is a machine rather
than a user \citep{bohme2018}, and were otherwise developed for flat,
exchangeable populations. In particular, no existing method bounds the
probability that a released \emph{user} reaches an undetected defect. This paper
supplies exactly that, by carrying a coverage estimator whose sampling
distribution is the user population.

The one place coverage has reached software is fuzzing, and the contrast with
that work makes our estimand precise. \citet{bohme2018} models automated test
generation as species discovery and uses the same estimator $s/n$ to bound the
residual risk that a fuzzing campaign has missed a bug, and hence to decide when
to stop the campaign. The estimator is shared; the sampling distribution is not,
and the sampling distribution is where the meaning lies. For a fuzzer, the
abundance of a bug is the probability that the tool's input generator stumbles
onto it---an artefact of the machine's randomness that says nothing about how the
software is used. In the release problem the sampler is the user population, and
the abundance $p_i$ of a bug is the fraction of users who encounter it. The two
frameworks thus evaluate the same $s/n$ over different measures---machine input
space in one, user behaviour in the other---and only the latter weights each bug
by how people actually exercise the software. Consequently $s/n$ here bounds the
proportion of \emph{users} who will meet a not-yet-seen bug, a user-facing
quantity the fuzzing framework cannot produce because it contains no users. This
reading holds provided the observations are genuine user encounters---for
instance beta-channel telemetry recording how many users each defect
affects---rather than a curated list with a single entry per bug.

Transplanting coverage into software meets an obstacle with no counterpart in
ecology. When one samples a biological population, every class is reachable on
every draw; a species is missed only by chance. Software is structured, not flat.
Its branching architecture is a building of rooms behind doors, and a bug is a
closed door: the code it guards cannot run, and any bugs residing beyond it
cannot surface, until that bug is fixed. Defects are therefore not exchangeable
draws from a fixed pool but are partially ordered by reachability, and the
population of \emph{discoverable} defects itself enlarges whenever a gating bug
is removed. A coverage figure computed on the current build appears, on its face,
to certify nothing about these nested and still unreachable defects---the
difficulty this paper must overcome.

The obstacle is, in fact, the resolution. A closed door is reachable only by the
users who first pass through it, so the total probability mass of everything
gated behind it cannot exceed the mass of the door itself; nesting adds nothing,
because reachability compounds multiplicatively down the tree and every factor is
at most one. A user-weighted coverage estimate at the current frontier therefore
already dominates the combined contribution of all undiscovered defects behind
it, gated or not, and the depth and width of the unexplored subtree need never be
measured.

\paragraph{Contributions.}
We make this argument precise and put it to work. First, we recast the release
question as the estimation of a user-weighted missing mass---the probability that
a released user reaches a defect not seen in testing---in place of a residual
defect count. Second, building on the urn construction of
\citet{hernandez2018}, we show that under the canonical form $s/n$ is an exact
maximum-likelihood estimate of that mass and, because the construction charges
every singleton to the unseen reservoir, a conservative upper bound on it. Third,
we show that the estimate depends on the software's reachability structure only
through the report counts, so it is invariant to whether that structure is a tree
or a directed acyclic graph, and defects hidden behind other defects are bounded
automatically. Fourth, we validate the estimate against synthetic populations
with known ground truth, and identify the encounter-level data under which the
user-facing reading holds empirically.

\section{The Canonical Form}
\label{sec:canonical}

Throughout, a \emph{population} is a probability distribution over a set of
classes---equivalently colours, types, or species---and $X_1,\dots,X_n$ is an
independent sample of size $n$ drawn from it. For the sample, let $m_r$ denote
the number of classes observed exactly $r$ times, so that $s:=m_1$ is the number
of \emph{singletons} and $n=\sum_r r\,m_r$. Our target is the \emph{unseen
proportion}
\[
  U \;=\; \sum_{c\,:\,c\ \text{unobserved}} p_c ,
\]
the total probability carried by classes absent from the sample; equivalently,
$U$ is the probability that the next draw belongs to a class not yet seen. In the
release problem of the Introduction, $U$ is the probability that a fresh user
exercises a defect not encountered in testing.

\subsection{The canonical urn}

We first isolate a family of populations rich enough to contain every case of
interest, yet simple enough to admit an exact likelihood analysis. The
construction and the maximum-likelihood result that follows are due to
\citet{hernandez2018}, in the setting of germplasm collections; we recall them
here and apply them to bounding user-facing release risk.

\begin{definition}[Canonical form]
\label{def:canonical}
A population is in \emph{canonical form} with $K$ components if its classes
divide into
\begin{itemize}
  \item $K-1$ \emph{ordinary} classes, the $i$-th carrying probability $p_i$ and
        contributing a single colour; and
  \item one \emph{reservoir} of probability
        $p_K = 1-\sum_{i=1}^{K-1}p_i$, composed of arbitrarily many classes of
        equal, vanishing probability, so that any two draws from the reservoir
        return distinct colours almost surely.
\end{itemize}
We call the reservoir the \emph{all-different} component: every draw from it
yields a colour returned by no other draw.
\end{definition}

The reservoir is the device that lets a single population account at once for the
colours already pinned down and for the endless supply of colours not yet seen.
Its necessity is not a matter of convenience, as the next two subsections show.

\subsection{Every population is an instance of the canonical form}

The word \emph{subset} is tempting but points the wrong way: the canonical family
is the general object, and the familiar models sit inside it as special cases.

\begin{proposition}[Universality]
\label{prop:universality}
Fix any population and any sample of size $n$ drawn from it, and let the classes
observed two or more times have counts $n_1,\dots,n_{K-1}$ while $s$ classes are
observed exactly once. Then the sample has positive probability under the
canonical population whose ordinary classes are the multiply-observed ones and
whose reservoir supplies the singletons. Moreover the classical multinomial model
over the observed classes is exactly the canonical restriction $p_K=0$.
Consequently the canonical form specialises to the classical model and extends it
to populations from which unseen classes continue to arise.
\end{proposition}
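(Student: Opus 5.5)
The plan is to write down the probability of the observed sample under the canonical population directly, and then read off both claims from that expression. Fix the sample, with multiply-observed classes $c_1,\dots,c_{K-1}$ having counts $n_1,\dots,n_{K-1}\ge 2$ and $s$ singletons, so that $n=\sum_{i<K} n_i + s$. Define the canonical population with ordinary classes $c_1,\dots,c_{K-1}$ carrying any probabilities $p_i>0$, together with a reservoir of mass $p_K=1-\sum_{i<K}p_i>0$ whenever $s\ge 1$. A concrete choice that works is $p_i=n_i/n$ and $p_K=s/n$; this is also the choice that the maximum-likelihood result will later single out.

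The key step is to identify the event that reproduces the sample's pattern. Up to relabelling of colours, this event is that exactly $n_i$ draws land on ordinary class $i$, exactly $s$ draws land in the reservoir, and no draw lands on an ordinary class an unintended number of times. By Definition~\ref{def:canonical}, any two reservoir draws are distinct almost surely and are distinct from every ordinary colour. So conditional on the multinomial allocation, the $s$ reservoir draws yield $s$ distinct singleton colours with probability one. The probability of the pattern is therefore
\[
\frac{n!}{n_1!\cdots n_{K-1}!\,s!}\,p_1^{n_1}\cdots p_{K-1}^{n_{K-1}}\,p_K^{s},
\]
possibly times a combinatorial factor for the ordering of the singletons, depending on whether one tracks ordered draws. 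This is strictly positive under the choice above. The main subtlety, and the step I expect to need the most care, is to state precisely what ``the sample'' means. The reservoir colours are not fixed in advance, so the probability of the exact labelled singleton colours is zero. I would therefore phrase the claim for the sample's \emph{profile}: the counts of the named multiply-observed classes together with the fact that $s$ further distinct classes occur once. That profile is the only information the estimator $s/n$ uses.

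For the second claim, I set $p_K=0$. The reservoir then vanishes, the population is supported on the $K-1$ ordinary classes, and the displayed likelihood reduces exactly to the multinomial likelihood $\frac{n!}{\prod n_i!}\prod p_i^{n_i}$ over those classes. If one wants the classical model over \emph{all} observed classes, singletons included, one simply declares the singleton classes ordinary as well. The definition permits this, since nothing forces ordinary classes to have count at least two. The restriction $p_K=0$ then gives the full multinomial, and conversely every multinomial model arises this way.

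The closing sentence of the proposition then follows. The canonical family contains the classical model as the boundary face $p_K=0$. Its interior, $p_K>0$, assigns positive probability to new colours on every future draw, which is precisely a population from which unseen classes continue to arise. I would note in passing that $p_K>0$ is forced whenever $s\ge 1$ and singletons are treated as reservoir draws, which prepares the ground for the later maximum-likelihood identification $\hat p_K=s/n$.
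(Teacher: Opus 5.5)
Your proposal is correct and follows the paper's route: multiply-observed classes become ordinary components, singletons are charged to the all-different reservoir, and $p_K=0$ recovers the multinomial. Your version is more careful than the paper's brief proof on two points it leaves implicit. You state positivity for the sample's profile, since the labelled singleton colours have probability zero under a diffuse reservoir. You also note that singletons must be declared ordinary for $p_K=0$ to give the multinomial over \emph{all} observed classes rather than only the multiply-observed ones.
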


\begin{proof}
A class observed two or more times repeats a fixed colour and therefore populates
an ordinary component. A class observed exactly once is, from the sample alone,
indistinguishable from a fresh draw of the all-different reservoir, and may be
assigned to it; the sample then has positive probability under the stated
canonical population. Forbidding unseen classes is the constraint $p_K=0$, which
returns the ordinary multinomial over the observed colours.
\end{proof}

\subsection{Maximum-likelihood estimate of the unseen proportion}

Why the reservoir must be all-different, rather than a collection of ordinary
classes, is settled by a likelihood comparison.

\begin{lemma}[All-different dominance]
\label{lem:dominance}
Consider $s$ classes each observed once, and suppose a total probability $Q$ is
available to model them, the remaining classes held fixed. Assigning this mass to
$s$ ordinary classes contributes at most $(Q/s)^{s}$ to the sample likelihood,
whereas assigning it to the all-different reservoir contributes $Q^{s}$. The
reservoir is therefore the more likely explanation by a factor $s^{s}$, strictly
so for $s\ge 2$.
\end{lemma}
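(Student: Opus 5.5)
The plan is to make precise how the $s$ singleton observations enter the sample likelihood, then compare the two ways of spending the mass $Q$. The remaining classes (the multiply-observed ordinary classes and whatever else is held fixed) contribute a common factor to the likelihood, which cancels in the comparison. Likewise the multinomial coefficient counting orderings of the $n$ draws is the same under both assignments, because in both cases the $s$ singleton draws are distinguishable positions in the sequence. So it suffices to compare the product of the per-draw probabilities attached to the $s$ singleton draws.

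For the ordinary assignment, suppose the mass $Q$ is split among $s$ ordinary classes with probabilities $q_1,\dots,q_s\ge 0$ and $\sum_j q_j\le Q$. Each singleton is matched to one of these classes, and the likelihood factor is $\prod_{j=1}^s q_j$. If some of the $Q$ is spread over additional ordinary classes that are never observed, the sum $\sum_j q_j$ only shrinks, so we may assume $\sum_j q_j=Q$. By the AM--GM inequality, $\prod_j q_j\le (Q/s)^s$, with equality exactly when every $q_j=Q/s$. This gives the first bound.

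For the reservoir assignment, each singleton draw falls into the reservoir with probability $Q$. By Definition~\ref{def:canonical}, any two reservoir draws return distinct colours almost surely, so the event that the $s$ reservoir draws produce $s$ mutually distinct colours, each distinct from every other colour in the sample, has conditional probability one. The contribution is therefore $Q^s$. I would phrase this with the reservoir understood as a limit of $M$ equal classes of mass $Q/M$ as $M\to\infty$, where the colours are identified only up to relabelling. The distinctness probability is then $\prod_{k<s}(1-k/M)\to 1$, and the specific labels carry no information.

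The ratio of the two contributions is $Q^s/(Q/s)^s=s^s$. This equals $1$ for $s=1$ and exceeds $1$ for $s\ge 2$, which gives the strictness claim. The main obstacle is the bookkeeping in the reservoir step: the two likelihoods must be computed on the same event. In the ordinary model the likelihood refers to specific labelled colours; in the reservoir model only the partition of draws into colour classes is meaningful. I would resolve this by comparing both models on the induced partition pattern, the event that the $s$ draws are singletons with the given positions. The ordinary-model bound then requires one more check: relabelling the $s$ matched classes cannot push the total above $(Q/s)^s$. If counted naively, relabelling would introduce an $s!$ factor. I would state explicitly that the comparison is for a fixed assignment of singletons to classes, as the lemma's wording (``contributes at most'') suggests, rather than for the pattern probability.
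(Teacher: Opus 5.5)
Your route is the paper's: cancel the factor contributed by the fixed classes, bound $\prod_j q_j$ by AM--GM, assign $Q^s$ to the reservoir, and take the ratio. The gap is the one you flag yourself at the end, and your closing resolution does not close it. Take the common event you propose: each of the $s$ positions receives a colour that appears nowhere else in the sample. On that event the ordinary model with masses $q_1,\dots,q_s$ has probability $s!\prod_j q_j$, because every bijection of positions to classes produces it. The $s!$ is therefore not a naive overcount, and the check that relabelling cannot push the total above $(Q/s)^s$ fails for $s\ge 2$. Your own limit computation shows this: with $M$ equal classes of mass $Q/M$ the pattern probability is $Q^s\prod_{k=1}^{s-1}(1-k/M)$, which at $M=s$ equals $s!\,(Q/s)^s$. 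Retreating to a fixed labelled assignment on the ordinary side while keeping $Q^s$ on the reservoir side is exactly the mixed-event comparison you warned against. On a labelled outcome the reservoir's probability is $(Q/M)^s\to 0$, not $Q^s$. So no single consistent likelihood yields the factor $s^s$. The paper's proof makes the same switch: its phrase that a first-appearance class is ``identified only by its novelty'' is precisely the move from labelled to pattern probability. Your argument reproduces the paper faithfully, flaw included.

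What can be proved cleanly is strict dominance on the pattern event, and that is all Theorem~\ref{thm:mle} uses. Split $Q$ over any finite number $m$ of ordinary classes. The pattern probability is then $s!\,e_s(q_1,\dots,q_m)$, with $e_s$ the $s$-th elementary symmetric polynomial. By Maclaurin's inequality this is at most $s!\binom{m}{s}(Q/m)^s=Q^s\prod_{k=1}^{s-1}(1-k/m)$. That bound is strictly below $Q^s$ when $s\ge 2$ and increases to $Q^s$ as $m\to\infty$. This also reverses your reduction that extra unobserved classes only shrink $\sum_j q_j$. On the pattern event, spreading mass over more classes raises the probability, and the reservoir is the supremum of that process, attained by no finite split. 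Against the $s$-class split the ratio is $s^s/s!$, which is $1$ at $s=1$ and exceeds $1$ for $s\ge 2$. The conclusion and its strictness survive, but the factor should read $s^s/s!$, or the lemma should be stated as a strict inequality without a specific factor.
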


\begin{proof}
Model the $s$ singletons as ordinary classes with probabilities $q_1,\dots,q_s$
summing to $Q$. Each is seen once, so their joint contribution to the likelihood
is $\prod_{j=1}^{s} q_j$, which by the inequality of arithmetic and geometric
means is at most $(Q/s)^{s}$, with equality at $q_j=Q/s$. Model them instead as
draws from the all-different reservoir of mass $Q$. Because a first-appearance
class is identified only by its novelty and the reservoir's colours are
exchangeable, the $s$ reservoir draws contribute $Q^{s}$. The ratio is
$Q^{s}/(Q/s)^{s}=s^{s}$, which exceeds $1$ for $s\ge 2$.
\end{proof}

\begin{theorem}[Maximum-likelihood estimate of the unseen proportion]
\label{thm:mle}
Under the canonical form, the maximum-likelihood estimate of the unseen
proportion $U=p_K$ is
\[
  \widehat{U} \;=\; \widehat{p}_K \;=\; \frac{s}{n},
\]
with $\widehat{p}_i = n_i/n$ for the ordinary classes. In particular
$\widehat{U}=0$ when the sample contains no singletons.
\end{theorem}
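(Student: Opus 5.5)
The plan is to write down the likelihood of the sample under a canonical population and maximise it directly. The sample is reduced to its class counts. By Proposition~\ref{prop:universality}, the multiply-observed classes become the ordinary components, with counts $n_1,\dots,n_{K-1}$. The $s$ singletons are charged to the reservoir, as Lemma~\ref{lem:dominance} shows is the likelihood-preferred assignment. Because every reservoir draw produces a fresh colour, each of the $s$ singleton draws contributes a factor $p_K$, exactly as in the proof of Lemma~\ref{lem:dominance}. Up to a combinatorial constant not depending on the parameters, the likelihood is therefore
\[
  L(p_1,\dots,p_K) \;\propto\; \prod_{i=1}^{K-1} p_i^{\,n_i}\; p_K^{\,s},
  \qquad \sum_{i=1}^{K} p_i = 1,
\]
with $\sum_{i<K} n_i + s = n$. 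This is a multinomial likelihood with $K$ cells whose counts are $(n_1,\dots,n_{K-1},s)$.

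The next step is the standard multinomial maximisation. I will maximise $\log L=\sum_{i<K} n_i\log p_i + s\log p_K$ on the simplex, either with a Lagrange multiplier or by Gibbs' inequality. The Lagrange conditions give $n_i/p_i=\lambda$ and $s/p_K=\lambda$. Summing over all cells gives $\lambda=n$. Hence $\widehat p_i=n_i/n$ and $\widehat p_K=s/n$. Strict concavity of $\log L$ on the relative interior confirms that this is the unique maximiser. Since $U=p_K$ under the canonical form, it follows that $\widehat U=s/n$.

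For the degenerate case $s=0$, the factor $p_K^{s}$ is absent. The likelihood is then $\prod_{i<K} p_i^{n_i}$, which is increasing in each $p_i$, so any positive $p_K$ strictly lowers it. The maximum is attained at the boundary $p_K=0$. This is the classical multinomial of Proposition~\ref{prop:universality}, and it gives $\widehat U=0=s/n$.

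The main obstacle is justifying that the singletons must go to the reservoir rather than being split between ordinary classes and the reservoir. One could consider canonical populations in which some singleton class is an ordinary class with its own $p_i$. To rule this out, I will compare, for any such partition, the profile likelihood obtained after maximising over the $p$'s.

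Suppose $t$ singletons are made ordinary and $s-t$ are left to the reservoir. The maximised likelihood then has the form $\prod_{\text{ordinary}} (n_i/n)^{n_i}\,(1/n)^{t}\,((s-t)/n)^{s-t}$, up to the same constant. I will show this is increasing in $s-t$. The key fact is that the map $k\mapsto k^{k}$ dominates $1^{k}$, which is the profiled form of the factor $s^s$ in Lemma~\ref{lem:dominance}. It follows that the full-reservoir assignment $t=0$ attains the global maximum, so the maximiser found above is the MLE over the whole canonical family.
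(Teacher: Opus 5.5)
Your proposal is correct and follows the paper's proof. Like the paper, you send repeated classes to ordinary components and singletons to the reservoir, reduce to the $K$-cell multinomial likelihood with counts $(n_1,\dots,n_{K-1},s)$, and maximise with a Lagrange multiplier. Your profile-likelihood comparison over mixed assignments, which yields the factor $(s-t)^{s-t}$, spells out what the paper leaves to Lemma~\ref{lem:dominance}. One small correction: $k\mapsto k^k$ is nondecreasing rather than increasing, since it ties at $k=0$ and $k=1$. So for $s=1$ the full-reservoir assignment attains the maximum but not uniquely, consistent with the lemma's ``strictly so for $s\ge 2$''.
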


\begin{proof}
By Lemma~\ref{lem:dominance} the likelihood is maximised by attributing every
multiply-observed class to an ordinary component and every singleton to the
reservoir: a class seen twice or more cannot arise from the all-different
reservoir, and a class seen once is more likely there. The resulting likelihood
of the sample is, up to a factor independent of the parameters,
\[
  L(p_1,\dots,p_{K-1},p_K)\;\propto\;
  \Big(\prod_{i=1}^{K-1} p_i^{\,n_i}\Big)\,p_K^{\,s},
  \qquad \sum_{i=1}^{K-1}p_i + p_K = 1 ,
\]
a multinomial likelihood in $K$ cells with observed counts
$(n_1,\dots,n_{K-1},s)$ and total $n=\sum_i n_i + s$. Maximising subject to the
constraint, by a Lagrange multiplier, assigns each cell its empirical frequency:
$\widehat{p}_i=n_i/n$ and $\widehat{p}_K=s/n$. When $s=0$ the reservoir cell is
empty and $\widehat{p}_K=0$.
\end{proof}

\begin{remark}[The estimate is a conservative upper bound]
\label{rem:upperbound}
Under the canonical form the unseen proportion equals the reservoir mass, which
Theorem~\ref{thm:mle} estimates by $s/n$. The canonical form reaches this value
by charging \emph{every} singleton to the all-different reservoir. A real
population need not oblige: a class observed once may be an ordinary class that
merely happened to appear a single time, in which case its mass belongs to a
seen class, not to the reservoir. Attributing every singleton to the reservoir
can therefore only inflate the estimate, and $s/n$ overstates the true unseen
proportion on average,
\[
  \mathbb{E}\!\left[\tfrac{s}{n}\right] \;\ge\; \mathbb{E}[U],
\]
so that $s/n$ is a one-sided, upper estimate of $U$ rather than a point
prediction \citep{hernandez2018}. The direction is the one a release decision
requires: the estimate errs toward overstating the probability that a user meets
an unseen bug, never understating it. The mechanism is plainest in the extreme
case of a fully sampled population, in which every class has been seen and the
true unseen mass is zero, yet a class observed exactly once still contributes to
$s/n$; the estimate is then pure over-attribution.
\end{remark}

\begin{remark}
The canonical form is not offered as an alternative route to the number $s/n$,
which is already the Good--Turing coverage estimate $1-\widehat{C}$
\citep{good1953}. Its value is that it carries the estimate past a point at which
earlier treatments explicitly stopped. Over the classical model, whose only
classes are those observed, the maximum-likelihood estimate of the unseen mass is
identically zero---the method cannot see past the sample. Seeking a genuinely
distribution-free likelihood treatment, \citet{good1953} instead recorded that it
``leads to a mathematical problem that I have not solved,'' and fell back on an
expected-frequency argument with smoothing; the underlying difficulty was later
shown to be fundamental for \emph{consistent} estimation of the missing mass
\citep{mossel2019}. Theorem~\ref{thm:mle} passes the first of these walls: under
the canonical form, $s/n$ is not an approximation but the exact
maximum-likelihood estimate of the unseen proportion---an MLE, we stress, rather
than a claim of consistency, so no conflict with the impossibility result arises.
The same construction, moreover, is what will let us establish the estimate's
behaviour under the reachability constraints of software.
\end{remark}

\begin{remark}[Toward the release problem]
Under the canonical form, the probability that the next observation belongs to an
unseen class is exactly the reservoir mass $p_K$. In the release setting the next
observation is the next user's execution, and $p_K$ is the probability that this
user reaches a defect class unseen in testing. It remains only to confirm that
this reading does not depend on the internal structure of the software, to which
we turn in Section~\ref{sec:scope}.
\end{remark}

\subsection{Scope and the role of the architecture}
\label{sec:scope}

The estimate is a snapshot: a function of the reports in hand at a given moment,
returning an upper bound on the proportion of users who would still meet a bug
were the software released in its current state. Whether to release, or to gather
further reports, is the developer's decision and lies outside the method; if more
reports are collected, $s/n$ is simply recomputed on the enlarged sample. We
claim nothing about how one estimate relates to the next, only that each is valid
for the state on which it is computed.

\paragraph{Why the architecture does not matter.}
The architecture decides one thing: how many users reach each defect. A defect on
a common path is reached by many, one behind a closed door by none. Whether the
paths form a tree, in which each defect is reached by a single route, or a
directed acyclic graph (DAG), in which a defect may be reached by several, only
changes these numbers---it does not change what they are.

And those numbers are exactly what the reports record: a defect reached by many is
reported by many, one reached by none is reported by no one. So the architecture's
whole effect is already in the counts.

Our estimate is computed from the counts alone. It never consults the structure
because it does not need to---the structure has already left its mark there. Tree
or DAG, the same reports give the same $s/n$.

\section{A synthetic demonstration}
\label{sec:synthetic}

Real defect archives do not directly expose the quantity we wish to bound: a
bug report records that a defect was filed, not how many users encountered it,
and the software from which the reports arise changes over time. We therefore
validate the estimator on synthetic populations, where the ground truth is known
and the estimate can be checked against it directly.

\paragraph{Setup.}
We fix a population of $M$ bug classes with reach probabilities
$p_1,\dots,p_M$, where $p_i$ is the fraction of users who encounter bug $i$. A
sample of size $n$ is drawn---each draw a user encounter---and we record the
estimate $s/n$ together with the \emph{true} unseen proportion
$U=\sum_{i:\,c_i=0} p_i$, the total reach of the bugs absent from the sample,
which is available because we chose the population. All quantities below are
Monte-Carlo averages over many replications. The estimator is never told $M$ or
the shape of $\{p_i\}$; it sees only the sample.

\paragraph{The estimate is one-sided.}
Table~\ref{tab:synthetic} reports a population of $20$ bugs with $p_i=1/20$, over
a range of sample sizes. In every row the mean estimate meets or exceeds the
true missing mass, confirming $\mathbb{E}[s/n]\ge\mathbb{E}[U]$ of
Remark~\ref{rem:upperbound}. The gap is small---the estimate is accurate, not
merely conservative---but it never changes sign. Figure~\ref{fig:synthetic}
shows the same behaviour for a heavy-tailed population: the $s/n$ curve lies on
or above the true missing mass throughout, and the shaded region is the
over-attribution.

\begin{table}[h]
\centering
\begin{tabular}{rcccc}
\hline
$n$ & $\mathbb{E}[s/n]$ & $\mathbb{E}[U]$ & gap & $\Pr(s/n \ge U)$ \\
\hline
 25 & 0.2928 & 0.2766 & $+0.0162$ & 0.53 \\
 50 & 0.0809 & 0.0765 & $+0.0043$ & 0.57 \\
100 & 0.0063 & 0.0059 & $+0.0004$ & 0.89 \\
200 & 0.0000 & 0.0000 & $+0.0000$ & 1.00 \\
400 & 0.0000 & 0.0000 & $+0.0000$ & 1.00 \\
\hline
\end{tabular}
\caption{A population of $20$ bugs with $p_i=1/20$. The estimate $s/n$ meets or
exceeds the true unseen proportion $U$ at every sample size; the bound is
one-sided.}
\label{tab:synthetic}
\end{table}

\paragraph{The over-attribution made visible.}
The mechanism of Remark~\ref{rem:upperbound} is starkest when the population is
nearly saturated. With $20$ bugs and $n=100$ encounters, all $20$ bugs are seen
in about $89\%$ of samples, so the true missing mass is essentially zero; yet
$s/n$ averages about $0.006$, and among the fully covered samples---where the
true unseen mass is exactly zero---it remains positive. That residual is pure
over-attribution: a known bug, observed once, charged by the canonical form to
the reservoir of the unseen. The estimator cannot tell such a singleton from a
genuinely rare one, and errs, as intended, on the side of caution.

\begin{figure}[h]
\centering
\includegraphics[width=0.82\linewidth]{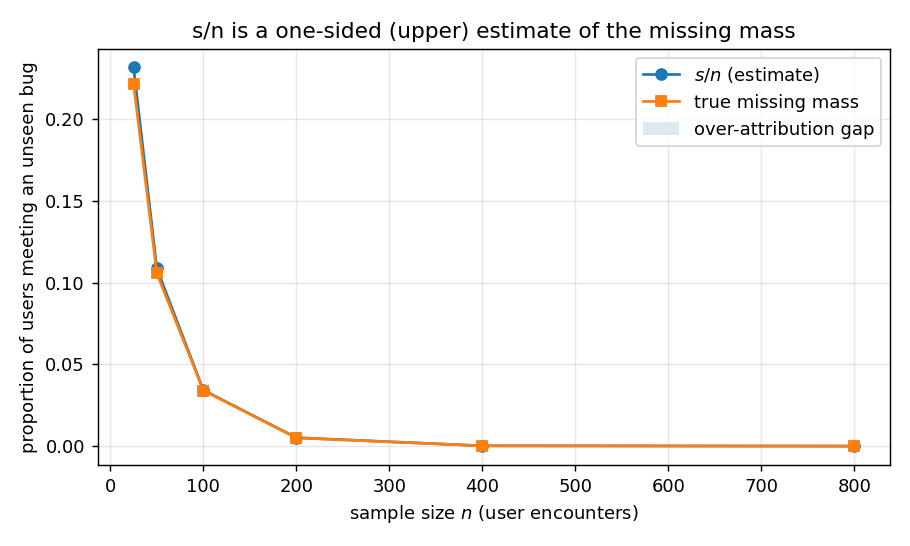}
\caption{A heavy-tailed population of $20$ bugs. The estimate $s/n$ (circles)
lies on or above the true missing mass (squares) at every sample size; the
shaded region is the over-attribution gap. Both decline as the sample grows,
the release story in miniature: continue testing until the bound, not merely the
estimate, falls below the tolerated risk.}
\label{fig:synthetic}
\end{figure}

\section{Discussion}
\label{sec:discussion}

The release question---what fraction of users will still meet a bug if we ship
now?---admits a direct, distribution-free answer. Reading each beta report as a
draw from the user population and each distinct defect as a class, the singleton
fraction $s/n$ estimates the probability that the next user encounters a defect
unseen in testing. Because the canonical form charges every singleton to the
reservoir of the unseen, the estimate is one-sided: it overstates that
probability rather than understating it, the direction a release decision needs.
It requires no operational profile, no assumption on the number or frequency of
defects, and no model of the software's internal structure, since the report
counts already carry the effect of that structure.

The estimator itself is not new; it is the coverage estimator of
\citet{good1953}, given a maximum-likelihood foundation through the canonical
urn by \citet{hernandez2018}. What is new here is its estimand and its domain: a
bound on the proportion of \emph{users} affected, for the release decision. This
separates the method from the two bodies of work with which it might be
confused. The software-reliability literature---growth models and
capture--recapture alike---estimates a defect \emph{count}, and needs an
operational profile to turn that count into a user-facing figure; we estimate the
user-facing figure directly. And while \citet{bohme2018} brought the same
coverage estimator to software, its sampler is a fuzzer exploring an input space,
so its abundances reflect a machine's randomness; ours reflect how a population
of people uses the software. A fuzzer and a user base differ not only in what
they sample but by orders of magnitude in how much: fuzzing campaigns saturate
coverage with millions of inputs and drive $s/n$ toward zero, whereas a beta
population is sparse---and it is precisely in the sparse regime that the bound is
informative.

The method's central requirement is also its central limitation. The user-facing
reading holds only when the observations are genuine user encounters, so that a
defect's multiplicity reflects how many users met it. Public defect archives do
not meet this condition: filing a report is costly, most defects are filed once
regardless of how many users hit them, and the reporting step erases the very
multiplicity the estimate depends on. Compounding this, an archive spanning many
releases is not a sample from one fixed population but a superposition of many,
so its coverage never settles. These are the reasons we validated the estimator
against synthetic populations with known ground truth rather than against an
archive. Realising the user-facing claim empirically calls for encounter-level
data---beta-channel or crash telemetry that records automatically how many users
each defect affects---where the reporting filter is absent by construction.

In use, the estimate is a snapshot, valid for the software as it stands; the
decision to ship or to keep testing is the developer's, and each further round of
testing simply yields a fresh $s/n$ on the enlarged sample. The natural rule is
to continue until the bound falls below the tolerated risk. Since $s/n$ is an
accurate but noisy estimate, one may also bound the sampling error, for which
distribution-free concentration inequalities for the missing mass are available
\citep{mcallester2003,bubeck2013}.

Several refinements suggest themselves. The estimate weights each defect only by
its reach, treating a crash on the sign-in screen and a cosmetic glitch alike; a
severity-weighted coverage would let the bound speak to user-facing harm rather
than mere incidence. The draws are assumed exchangeable, which correlated usage
patterns may violate. Finally, the canonical-urn maximum likelihood invites
comparison with profile maximum likelihood \citep{acharya2017}, which maximises
over the same enlarged parameter space by a different route; relating the two
precisely we leave to future work.

\bibliographystyle{plainnat}
\bibliography{refs}

\end{document}